\documentclass[11pt]{article}
\usepackage{fullpage}
\usepackage{amsmath}
\usepackage{amsfonts}
\usepackage{amssymb}
\usepackage{amsthm}
\usepackage{graphicx}
\usepackage{mathtools}
\usepackage{mleftright}
\usepackage{cancel}

\newtheorem{theorem}{Theorem}

\newtheorem{lemma}{Lemma}

\newtheorem{definition}{Definition}

\title{Is competitive online paging an artifact?}
\author{Enoch Peserico\footnote{Università degli Studi di Padova.} \and Michele Scquizzato\footnote{Università degli Studi di Padova, \texttt{scquizza@math.unipd.it}. Supported in part by the Italian National Center for HPC, Big Data, and Quantum Computing.}}
\date{}

\begin{document}

\maketitle

\begin{abstract}
In any real system a newly computed datum begins its existence in the processor rather than in external memory, and thus does not inevitably incur a cold miss. This was captured by early I/O models, but not by the Sleator-Tarjan one that has come to underpin competitive analysis of paging. If one corrects the Sleator-Tarjan model by charging no cost for the first access to newly computed data, optimal offline algorithms such as LFD remain optimal, but no online paging algorithm can be competitive, even if randomized, even with arbitrary resource augmentation, even against request sequences that are not tailored against it but are instead representative of widely used computational techniques. The proofs are simple, and appear robust against any reasonable assumption/model adjustment, including virtually all tools developed to make competitive analysis less pessimistic.

In other words, while competitive analysis does predict the good performance exhibited in practice by online paging algorithms such as LRU, these predictions seem just a fortuitous artifact of an incorrect assumption that has crept into the underlying model several decades ago. And there are implications beyond paging, too: for example, the same issue undermines the Ideal Cache model on which the popular Cache-Oblivious and Cache-Adaptive algorithmic frameworks are based.
\end{abstract}

\section{Introduction}
\label{sec:intro}
This work analyses the impact of a subtle flaw that has crept into the theoretical analysis of paging several decades ago: the implicit assumption that essentially every datum, even if newly computed, begins its existence in external memory and thus inevitably incurs a cold miss. After Section~\ref{sec:paging} briefly reviews paging within the competitive analysis framework, Section~\ref{sec:zeroin} re-examines the aforementioned assumption, how and when it entered the literature, and how instead one should simply discount the first access to any page in a ``zero-in'' set. Sections~\ref{sec:offline} and \ref{sec:online} show that, if we remove the assumption and allow zero-in pages, offline algorithms such as LFD remain optimal regardless of the zero-in page set, but no online algorithm can be competitive, even if randomized, even with arbitrary resource augmentation, even when faced with natural request sequences that are not tailored against it and that are in fact representative of widely used computational techniques. Section~\ref{sec:robust} examines the resilience of this very negative result to various refinements of competitive analysis. Only Full-Cost analysis (where non-fault accesses cost a fraction $\epsilon>0$ of faults) offers a marginally more optimistic outlook; Section~\ref{sec:fullcost} provides a tight characterization of the $(h,k)-$competitive ratio achievable. Section~\ref{sec:out} briefly looks at a simple extension of the zero-in framework to deal with the cost of evictions. Finally, Section~\ref{sec:end} briefly summarizes our results and examines some of their implications, including how they undermine the Ideal Cache model on which all the Cache-Oblivious and Cache-Adaptive algorithms literature is based.

\section{Paging}
\label{sec:paging}
The memory/data storage system of computing devices is almost always organized as a hierarchy of several layers of progressively larger capacity but also higher access cost,
in terms of both time and energy; efficiently orchestrating the flow of information across different layers is crucial for performance. The most widely used theoretical model for studying this \emph{paging problem} is a two-layer system: a smaller \emph{memory} layer with a capacity of $k$ \emph{pages} (data blocks), and a larger layer of infinite capacity whose pages can only be accessed by first copying them into memory -- an operation usually termed a {\em (page) fault}. Given any sequence of pages that must be accessed in order, a paging algorithm must choose which page(s) to evict from memory, whenever a new page must be copied into it, so as to minimize the total number of faults.

The simple algorithm LFD that evicts the page accessed furthest in the future has long been known to be optimal~\cite{online}. However, paging is often studied as an {\em online} problem, i.e. an algorithm can decide evictions only on the basis of past requests. A popular framework for evaluating the performance of online paging algorithms is that of \emph{competitive analysis}~\cite{competitive}.  A paging algorithm is said to have an $(h,k)-$\emph{competitive ratio} of (no more than) $\rho$ if, for every request sequence, it incurs in expectation with a memory of size $k$ at most $\rho$ times as many faults as an optimal offline algorithm incurs with a memory of size $h\leq k$, plus a number of faults independent of the request sequence. The ratio $\frac{k}{h}$ is called the \emph{resource augmentation}. Resource augmentation and competitive ratio capture, respectively, the space and access-cost overheads incurred by an online algorithm.

Even with the worst-case approach of competitive analysis, many simple deterministic online algorithms have an $(h,k)-$competitive ratio (optimal for deterministic algorithms) of $\frac{k}{k-h+1}$~\cite{SleatorT85,online} and thus never fare worse than an optimal offline algorithm would on a system with half the memory capacity and twice the access cost. \cite{online}, \cite{onlinesurvey05} and~\cite{competitivealternatives} provide excellent surveys of the many variants of competitive analysis for the paging problem:
these include randomization~\cite{marking}, somehow limiting the choice of the adversarial request sequence~\cite{accessgraph,multifinger}, amortizing the performance evaluation over a spectrum of ``related'' sequences~\cite{averageorder,bijective} or of memory sizes~\cite{youngcachechanges}, considering pages of different size and access cost~\cite{iranicachechanges,youngcachechanges}, and accounting for the non-zero cost of non-fault requests~\cite{fullcost}.

\section{Zero-in paging}
\label{sec:zeroin}
A crucial shortcoming of the widely adopted paging model described in the previous section is the implicit assumption that every page begins its existence outside of memory, and thus incurs an inevitable fault when accessed for the first time -- a so-called ``cold miss''.

Instead, in modern operating systems
a newly allocated virtual memory page is empty, and thus there is no need to bring its (non-existing) contents into memory the first time it is accessed. The system simply maps the virtual page address range to the address range of an unused portion of physical memory, avoiding a disk read and the corresponding latency. Likewise, the first time a process accesses data produced by another process and still present in memory, there is no need to perform a disk read.
Something similar happens, rather than at the disk-memory interface, at the cache-registers interface within processors: an efficient compiler can generally recognize the first access to a variable, and generate code that does not issue a read to the corresponding memory location but instead directly accesses, at a much reduced cost, the register into which the variable will be mapped.

In a nutshell, a newly computed datum begins its existence in the processor ALUs/registers, at the very top of the memory hierarchy rather than at the bottom; so no additional time or energy must generally be spent to access it for further processing. This was indeed captured by early theoretical I/O models~\cite{HongK81}. Unfortunately, the first work casting paging within the competitive analysis framework~\cite{SleatorT85} considered paging as a special case of list update, with new list elements/pages being placed in the last (i.e. most expensive) list position rather than the first -- implicitly assuming that new data begin their existence as far away as possible from the processor ALUs/registers rather than within them. And all subsequent work on online paging has followed suit, assuming that every page in a request sequence begins its existence outside of memory and thus inevitably incurs the full cost of a fault when first accessed (with the possible exception of at most $k$ pages in memory before the first request is issued, which are too few to impact the competitive ratio and are thus safely ignored~\cite{online}). 

Our work investigates the consequences of removing this incorrect assumption, designating instead an \emph{arbitrary} set of pages in any given request sequence as ``zero-in'' and charging no cost for the initial access to any such page. Zero-in pages can represent newly computed data, as well as data inherited at the top of the memory hierarchy from a different process, including one running in parallel~\cite{shortmulticore}. We shall hereafter denote by \emph{ST paging} the paging model described in the previous section and underpinning most work on competitive online paging, and by \emph{zero-in paging} the (older) paging model that simply removes from ST paging the assumption that all pages begin their existence outside of memory.

Formally, zero-in paging assumes that any request sequence $\sigma=p_1,\dots,p_n$ is paired with a zero-in page set $Z$, revealed to any paging algorithm before the first page $p_1$ in $\sigma$;
starting with an empty memory, paging then proceeds exactly as in the ST paging, except that for each page in $Z$ the cost of accessing it for the first time is $0$ rather than $1$.\footnote{An alternative would be to reveal to the paging algorithm whether a page belongs to $Z$ or not the first time it is accessed. It is easy to prove that the two formulations are equivalent (note that we do not require all pages in $Z$ to be part of $\sigma$, so knowledge of a yet unencountered page in $Z$ provides no information on future requests); the current formulation yields a slightly cleaner model, in particular when it is extended to zero-out pages -- see Section~\ref{sec:out}.} More precisely, when servicing $\sigma$ with a memory of $k$ pages, let $M^k_0 = \emptyset$, and $M^k_i$ for $i=1,\dots, n$ the set of pages in memory immediately after $p_i$ is serviced; obviously $|M^k_i|\leq k$ and $p_i \in {M^k_i}$. Then, the cost incurred by ALG to service the $i^{th}$ request for $i=1,\dots,n$ is
\begin{equation}
\label{eqn:single_page_cost}
c_i = |(M^k_i\setminus M_{i-1}^k)\setminus (Z\cap (M_i^k\setminus \cup_{j=0}^{i-1} M_j^k))|
\end{equation}
and the total cost incurred by ALG on the request sequence $\sigma$ with zero-in page set $Z$ is $c_{ALG}^Z(\sigma)=\sum_{i=1}^n c_i$.

The definition above is very general and makes no assumptions on ALG, such that ALG be demand paging, except that any page be moved into or out of memory only once in response to a single request; so the cost to service the $i^{th}$ request equals the number of new pages in the memory since the $(i-1)^{th}$ request, $|(M^k_i\setminus M_{i-1}^k)|$, ignoring those that are zero-in and were never accessed before, i.e. those in $Z\cap (M_i^k\setminus \cup_{j=0}^{i-1} M_j^k)$. The assumption of an initially empty memory is not restrictive; to model an initial set $I$ of pages in memory, one can simply prepend to $\sigma$ a prefix with one request to each page in $|I|$, and include $I$ in $Z$ so this ``prefetching'' phase incurs no cost. Finally, note that in principle the eviction choices of ALG could depend on $Z$; if that is not the case, i.e. if, for every $\sigma$ and $k$, ALG exhibits the same memory contents $M_i^k$ at each step regardless of $Z$, we say that ALG is \emph{zero-independent}. All paging algorithms in the literature are, to the best of our knowledge, zero-independent.

\section{Offline zero-in paging}
\label{sec:offline}
It is very easy to prove that the actual zero-in page set $Z$ of a request sequence has essentially no impact on optimal offline paging. More precisely, if we take a paging algorithm that is optimal for \emph{some} zero-in page set $\bar Z$, then always running it \emph{as if} the page set \emph{were} $\bar Z$ (regardless of what it actually is) yields a paging algorithm that is optimal for \emph{every} zero-in page set $Z$ (and is obviously zero-independent too). Thus LFD, that is optimal with an empty zero-in page set and always runs as if the zero-in page set were empty, remains optimal for any zero-in page set.

Formally, denoting by ALG$_Z$ the paging algorithm making the same replacement choices that ALG would if the zero-in page set were $Z$ (note that ALG$_Z$ is always zero-independent even if ALG is not) we can prove the following:

\begin{theorem}
\label{thm:offline}
If $c^{\bar Z}_{ALG}(\sigma) \leq c^{\bar Z}_{ALG'}(\sigma)$ for all $ALG'$ and $\sigma$, then
$c^{Z}_{ALG_{\bar Z}}(\sigma) \leq c^{Z}_{ALG'}(\sigma)$ for all $ALG'$, $\sigma$, and $Z$.
\end{theorem}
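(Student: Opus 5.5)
The plan is to write the cost of an arbitrary algorithm under an arbitrary zero-in set $Z$ as its cost under the empty zero-in set minus a correction term. Then I show that this correction term does not depend on the algorithm. Fix $\sigma=p_1,\dots,p_n$ and the memory size $k$. For any memory-content sequence $M^k_0=\emptyset,M^k_1,\dots,M^k_n$, definition (\ref{eqn:single_page_cost}) charges each page once for every time it enters memory. The only exception is that the first entry of a page in $Z$ is free. Hence
\[
c^{Z}_{ALG}(\sigma)=c^{\emptyset}_{ALG}(\sigma)-|Z\cap \textstyle\bigcup_{j=0}^{n} M^k_j| .
\]
To justify this I sum (\ref{eqn:single_page_cost}) over $i$. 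The subtracted set $Z\cap(M^k_i\setminus\cup_{j<i}M^k_j)$ is always contained in $M^k_i\setminus M^k_{i-1}$, so the set difference is exactly a subtraction of cardinalities. Moreover, the sets $M^k_i\setminus\cup_{j<i}M^k_j$ are pairwise disjoint, and their union is $\bigcup_j M^k_j$.

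The key step, and the only real obstacle, is that $\bigcup_j M^k_j$ depends on the algorithm. An algorithm may load pages that are never requested. I handle this by comparing against a normalized algorithm instead of the raw one. For the lower side this is harmless. Every requested page lies in $\bigcup_j M^k_j$, because $p_i\in M^k_i$. Any extra unrequested page in $Z$ that is loaded contributes $+1$ to $c^\emptyset$ at its first load and $-1$ to the correction. So its net contribution to $c^Z$ is nonnegative. Thus
\[
c^Z_{ALG'}(\sigma)\ \ge\ c^\emptyset_{ALG'}(\sigma)-|Z\cap P_\sigma|-|U'|+|U'|\ \ge\ \dots
\]
where $P_\sigma$ is the set of pages requested in $\sigma$ and $U'$ is the set of unrequested $Z$-pages touched by $ALG'$. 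More cleanly, deleting every unrequested page from each $M^k_i$ of $ALG'$ yields a valid algorithm $ALG''$ with $c^Z_{ALG''}\le c^Z_{ALG'}$ for every $Z$. This holds because each remaining entry is charged the same or less, and no first access of a requested page changes. We may therefore assume $\bigcup_j M^k_j=P_\sigma$ whenever we need a lower bound.

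The same issue arises for $ALG_{\bar Z}$ itself, which may prefetch unrequested pages. Here I use optimality under $\bar Z$ to compare with its own normalized version. If $ALG_{\bar Z}$ touches unrequested pages, its pruned version $ALG''$ satisfies $c^{\bar Z}_{ALG''}\le c^{\bar Z}_{ALG}$. Every unrequested page outside $\bar Z$ costs at least $1$ when loaded, while a page in $\bar Z$ costs at least $0$ under $\bar Z$ but may cost $1$ under $Z$. This case needs care. If it fails, I would first apply the pruning to $ALG$ and argue that the pruned $ALG$ is still optimal for $\bar Z$, and hence is a legitimate replacement. Then I would note that the theorem is then about the pruned choices. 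Alternatively, I would observe that the cost formula charges unrequested pages in $Z\setminus\bar Z$ nothing extra beyond what $\bar Z$ charged in any case.

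With normalization in place, for every $ALG'$ the chain runs as follows. Pruning $ALG'$ only lowers its cost, and applying the identity gives
\[
c^Z_{ALG_{\bar Z}}(\sigma)=c^{\bar Z}_{ALG}(\sigma)+|\bar Z\cap P_\sigma|-|Z\cap P_\sigma|\le c^{\bar Z}_{ALG''}(\sigma)+|\bar Z\cap P_\sigma|-|Z\cap P_\sigma|=c^Z_{ALG''}(\sigma)\le c^Z_{ALG'}(\sigma).
\]
The inequality in the middle is the hypothesis applied to $ALG''$. The two outer equalities use the identity twice, once for $Z$ and once for $\bar Z$, on the same memory sequence.
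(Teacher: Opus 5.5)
The cost identity $c^Z_{ALG}(\sigma)=c^\emptyset_{ALG}(\sigma)-|Z\cap\bigcup_j M^k_j|$ is correct, and your pruning of ALG$'$ is sound. When both memory sequences hold only requested pages, your final chain is the paper's argument, written directly rather than by contradiction, with ALG$''$ in the role of the paper's ALG$'_Z$.

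The genuine gap is the step you flag yourself. The first equality of the chain uses $\bigcup_j M^k_j=P_\sigma$ for the memory sequence of ALG$_{\bar Z}$, and neither fallback supplies it. Your second fallback looks at the wrong pages: the harmful ones are unrequested pages in $\bar Z\setminus Z$, which are free under $\bar Z$ but cost $1$ under $Z$. Your first fallback proves the claim for the pruned algorithm, not for ALG$_{\bar Z}$.

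No argument can close this step, because without an extra assumption the statement is false. Let ALG be LFD, modified so that (if $k\geq 2$) it brings in, together with its first request, some page $q\in\bar Z$ never requested in $\sigma$, and evicts $q$ first whenever it needs room. Under $\bar Z$ the entry of $q$ is free, so ALG costs exactly what LFD costs. It is therefore optimal for $\bar Z$ on every $\sigma$, since LFD is optimal for every zero-in set by the case $\bar Z=\emptyset$ discussed below. Yet for any $Z$ with $q\notin Z$, ALG$_{\bar Z}$ pays one fault more than LFD. Concretely, take $\sigma=a$, $k=2$, $\bar Z=\{q\}$ and $Z=\emptyset$: ALG$_{\bar Z}$ sets $M^k_1=\{a,q\}$ at cost $2$, whereas $M^k_1=\{a\}$ costs $1$.

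The paper's proof avoids this obstacle only because its identity $c^Z_{ALG}(\sigma)=c^\emptyset_{ALG}(\sigma)-|Z\cap\sigma|$ tacitly assumes that every page ever in memory is requested in $\sigma$, for ALG$_{\bar Z}$ and ALG$'_Z$ alike. Your pruning legitimately removes that assumption on the ALG$'$ side. On the ALG side it must be stated as a hypothesis rather than derived.

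Your decomposition does show exactly what is lost without it. Write $e_x$ for the number of times page $x$ enters memory, and $[x\in\bar Z]$ for $1$ or $0$ according to membership. Each unrequested page held by ALG then contributes $e_x-[x\in\bar Z]\geq 0$ to $c^{\bar Z}_{ALG}(\sigma)$. Optimality under $\bar Z$ therefore forces every such page to lie in $\bar Z$ and enter exactly once. It also forces the pruned ALG to be optimal for $\bar Z$, hence, by your chain, optimal for $Z$. Consequently ALG$_{\bar Z}$ exceeds the optimum under $Z$ by exactly the number of unrequested pages it holds that are not in $Z$.

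So for given $\sigma$ and $Z$ the conclusion holds precisely when ALG, run with $\bar Z$, holds no unrequested page outside $Z$. In particular it holds unconditionally for $\bar Z=\emptyset$, which is the case the paper needs for LFD.
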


\begin{proof}
Given a set of pages $Z$ and a request sequence $\sigma$, with a slight abuse of notation denote by $Z\cap\sigma$ the set of pages in both $Z$ and $\sigma$; then for any ALG, $Z, \sigma$, if ALG is zero-independent $c^Z_{ALG}(\sigma)=c^\emptyset_{ALG}(\sigma) - |Z\cap\sigma|$, since the only difference between having $Z$ rather than $\emptyset$ as a zero-in page set is that in the former case each page in $Z\cap\sigma$ costs $0$ rather than $1$ to access for the first time. Then for any pair of zero-independent algorithms ALG$'$ and ALG$''$ and any request sequence $\sigma$ and zero-in page set $Z$ we have that $c^Z_{ALG'}-c^Z_{ALG''}=c^{\emptyset}_{ALG'}-c^{\emptyset}_{ALG''}$. Hence, if for some ALG$'$, $\sigma, Z$ we had 
$c^{Z}_{ALG_{\bar Z}}(\sigma) > c^{Z}_{ALG'} (\sigma) = c^{Z}_{ALG'_Z}(\sigma)$ 
then with zero-in page set $\bar Z$ we would have
$c^{\bar Z}_{ALG}(\sigma) = c^{\bar Z}_{ALG_{\bar Z}}(\sigma) > c^{\bar Z}_{ALG'_Z}(\sigma)$, against the hypothesis.
\end{proof}

\section{Online zero-in paging}
\label{sec:online}
In the light of the previous section it may be surprising that with zero-in paging most results on the competitiveness of \emph{online} ST paging do not hold: in particular, no online paging algorithm can be competitive, even if randomized, even with arbitrary resource augmentation, even when servicing  natural request sequences not adversarially tailored against it. And the request sequences we consider are those characteristic of (either of) two widely used techniques: \emph{beam searches} and \emph{genetic algorithms}.

Beam search~\cite{beam} is a heuristic search algorithm that explores (incompletely) a tree breadth-first by expanding, at each level, only a limited number of candidate nodes, selected according to some optimization criterion. All remaining subtrees are pruned (see Figure~\ref{fig:branchbound}). Beam search is used in a wide variety of application domains, from game AIs~\cite{beamAI} to computational linguistics~\cite{beamCL}, to operations research~\cite{beamOR}.

A similar structure is exhibited by genetic algorithms, another technique with a large set of application domains, from RNA structure prediction~\cite{genRNA} to CAD~\cite{genCAD}, to software verification~\cite{genIS} -- naming just a very few. A genetic algorithm combines and/or mutates a small set of $m$ initial solutions $s_0^1, \dots, s^m_0$ to generate $M \gg m$ additional solutions $s_1^1,\dots,s_1^M$. Of these $M$, all but $m$ are discarded; the $m$ survivors, chosen according to some optimization criterion, are combined to generate a new crop of $M$ additional solutions  $s^1_2, \dots, s^M_2$. $m$ survivors are again selected and combined to obtain a third generation of $M$ new solutions, and so on. Ideally, this process allows the survivors of successive generations to gradually improve the quality of the solution.

Consider a beam search over $\ell$ levels, with  beam width  (i.e.\ maximum number of candidates selected for expansion at each level) equal to $m$, and average outdegree of these candidates equal to $\frac{M}{m}$ (so that up to $M$ nodes are explored at each level). Similarly, consider a genetic algorithm running for $\ell$ generations, with $m$ survivors and $M$ new solutions per generation. Both generate a sequence of requests in the form
\begin{align*}
\sigma_\ell(m,M) =\,\,  &\sigma_1^1, \sigma_1^2, \dots, \sigma_1^M,\\
      	& \sigma_2^1, \sigma_2^2, \dots, \sigma_2^M,\\
        & \qquad\quad\vdots\\
		& \sigma_\ell^1, \sigma_\ell^2, \dots, \sigma_\ell^M  
\end{align*}
where $\sigma_i^j$ is the subsequence that computes $s_i^j$ -- that is, in the case of beam search, the subsequence that obtains the $j^{th}$ node at level $i$ from one of the $m$ candidates for expansion at level $i-1$, and in the case of a genetic algorithm, the subsequence that obtains the $j^{th}$ solution at generation $i$ from one or more of the $m$ survivors at generation $i-1$. 

\begin{figure}[ht]
  \centering
  \includegraphics[width=0.45\linewidth]{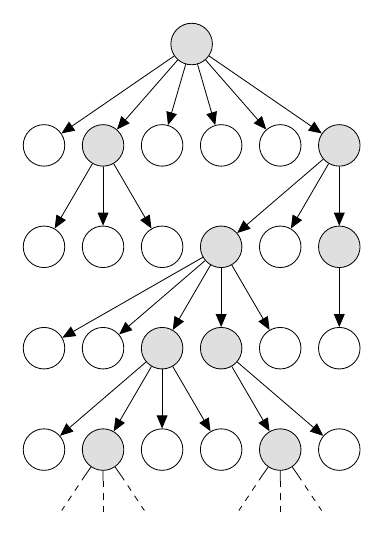}
  \caption{A beam search with beam width equal to two: only nodes in grey are expanded. This could also be seen as a genetic algorithm with two survivors per generation, and each new solution generated by (a mutation of) a single survivor.}
  \label{fig:branchbound}
\end{figure}

Assume for simplicity that each solution/node occupies exactly one page, allowing us to abuse the notation slightly and refer to ``page'' $s_i^j$; and that the $m$ \emph{parents} from which all pages of level/generation $i$ are obtained are $m$ pages $s_{i-1}^{j_i^1} \dots s_{i-1}^{j_i^m}$ chosen uniformly at random from those of level/generation $i-1$ . Note that every page other than the parents of the first level/generation (i.e. other than the inputs) is obtained via computation, and thus ``starts'' in memory and should be considered zero-in. Finally, let us make the realistic assumption that, at each level/generation, at most $m$ additional ``scratch'' pages $r_1 \dots r_m$ are used for the computation. Then, 
\begin{equation*}
\sigma_i^j \in \{s_{i-1}^{j_i^1}, \dots, s_{i-1}^{j_i^m},r_1,\dots,r_m, s_i^j \}^*
\end{equation*}
and it is easy to prove the following two lemmas:
\begin{lemma}
\label{lem:offline}
Assuming the zero-in page set includes every non-input page (i.e. it includes $p_h$ for $1\leq h\leq m$ and $s_i^j$ for $1\leq i\leq \ell$ and $1\leq j\leq M$), an offline algorithm can always service $\sigma_\ell(m,M)$ with memory $3m+1$ incurring at most $m$ faults.
\end{lemma}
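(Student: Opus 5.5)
The idea is to give an explicit offline schedule that never evicts a page it will request again. With that property, the only faults are first accesses to pages outside the zero-in set. Besides the at most $m$ pages of level/generation $0$, which the hypothesis lists among the zero-in pages $p_h$, the only such pages are the at most $m$ scratch pages $r_1,\dots,r_m$. The scratch pages are reused across all levels, so each is fetched at most once, giving at most $m$ faults. Everything else is a zero-in page that is created in memory and kept there for as long as it is needed.

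The schedule partitions the $3m+1$ slots into four groups: $m$ slots for the \emph{current parents}, $m$ slots for the \emph{next parents}, $m$ slots for the scratch pages, and one \emph{working slot}.
\begin{enumerate}
\item The current-parents slots hold the pages $s_{i-1}^{j_i^1},\dots,s_{i-1}^{j_i^m}$ while level $i$ is processed. There are at most $m$ of them, since repeated indices only reduce the count.
\item The next-parents slots hold those pages of level $i$ that the offline algorithm knows will be parents of level $i+1$.
\item The scratch slots hold $r_1,\dots,r_m$, which are never evicted once loaded.
\item The working slot holds $s_i^j$ while $\sigma_i^j$ is being served.
\end{enumerate}
When $\sigma_i^j$ starts, $s_i^j$ is a zero-in page never accessed before, so bringing it into the working slot costs $0$. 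By the form $\sigma_i^j \in \{s_{i-1}^{j_i^1},\dots,s_{i-1}^{j_i^m},r_1,\dots,r_m,s_i^j\}^*$, every request in $\sigma_i^j$ then hits a page already in memory, except the first access to each $r_t$. When $\sigma_i^j$ ends, there are two cases. If $s_i^j$ is a parent of level $i+1$ (known offline), it moves into a free next-parents slot; this is just a relabeling of slots and needs no fetch. Otherwise $s_i^j$ is evicted, which is harmless because it is never requested again: pages of level $i$ appear only in $\sigma_i^j$ itself and, if selected, as parents in level $i+1$.

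At the boundary between level $i$ and level $i+1$, the current parents will never be requested again, since level-$(i-1)$ pages are used only in level $i$. They are therefore evicted, and the next-parents slots take over the role of current parents. For the first level, the parents are the inputs. They are accessed for the first time during $\sigma_1^\cdot$ and, as the $p_h$, are free under the lemma's hypothesis. Were they charged instead, they would contribute at most $m$ further faults, and the argument would be unchanged.

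The only step requiring care is the invariant that memory never exceeds $3m+1$ at any moment within level $i$. This holds because at most $m$ old parents, at most $m$ already-completed new parents, $m$ scratch pages and one page under construction can be live at any instant.

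Finally, no page is ever evicted and later re-requested, so no page is brought into memory twice. Hence the cost is exactly the number of non-zero-in pages ever accessed, namely the scratch pages, which is at most $m$.
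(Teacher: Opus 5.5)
Your schedule is the paper's own: the same four groups (current parents, already-computed next parents, scratch pages, page under construction) filling $3m+1$ slots. Your observation that no page is ever evicted and later re-requested, so the cost equals the number of non-zero-in pages ever brought into memory, is sound.

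The fault accounting, however, is backwards. The hypothesis puts \emph{every non-input page} in the zero-in set, and the scratch pages are non-input pages. So the $p_h$ listed in the parenthetical must be the scratch pages (elsewhere called $r_1,\dots,r_m$), not the level-$0$ pages. The inputs are exactly the pages the hypothesis does \emph{not} guarantee to be zero-in. Hence the scratch pages cost nothing, and the only possibly charged pages are the at most $m$ parents of the first generation; the paper's $m$ faults are precisely for loading these.

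As written, you prove the bound under a different hypothesis (inputs free, scratch charged). Your fallback sentence is also wrong. If the inputs are charged and you still charge the scratch pages, you get up to $2m$ faults, not $m$, so the argument is not ``unchanged.'' The fix is to apply your own final principle with the correct set of non-zero-in pages, namely the inputs alone, which gives at most $m$ faults.
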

\begin{proof}
The offline algorithm, when computing generation $i$, maintains in memory its parents (up to $m$), the already computed  parents of the next generation (up to $m$), the scratch pages (up to $m$), and the page currently being computed; these require at most memory $3m+1$ to store. Also, all pages in generation $i$ can be obtained from those present in memory at the end of the computation of generation $i-1$ if $i>1$, or from the parents of the first generation if $i=1$. Then, the offline algorithm running with at least $3m+1$ memory incurs at most $m$ faults (to load into memory the parents of the first generation).
\end{proof}

\begin{lemma}
\label{lem:online}
Regardless of the zero-in page set, no online algorithm can service $\sigma_\ell(m,M)$ with memory less than $\frac{M}{2}$ incurring fewer than $\frac{m(\ell-1)}{2}$ faults in expectation.
\end{lemma}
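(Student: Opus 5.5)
The plan is to argue generation by generation. For each $i\geq 2$ I would show that the online algorithm incurs at least $m/2$ faults in expectation while servicing the requests of generation $i$ to the parents $s_{i-1}^{j_i^1},\dots,s_{i-1}^{j_i^m}$. Linearity of expectation over the $\ell-1$ generations $i=2,\dots,\ell$ then gives the bound $\frac{m(\ell-1)}{2}$.

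First I would check that the zero-in set cannot help with these particular requests. Each parent $s_{i-1}^{j}$ of generation $i$ was already accessed when it was computed during generation $i-1$, and we may assume every $\sigma_{i-1}^j$ contains at least one request to $s_{i-1}^j$. So its first access lies in the past, and any later request to it while it is out of memory costs a full fault, whatever $Z$ is. It therefore suffices to show that, in expectation, at least $m/2$ parents of generation $i$ are not in memory at the moment generation $i$ begins, i.e. immediately after the last request of $\sigma_{i-1}^M$ is serviced. Such a parent must be brought back at some point during generation $i$: we may take each parent to be requested at least once in generation $i$, which is the meaning of ``parent''. Each such parent is a distinct page, so each reload costs at least one fault.

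The key step is the online/randomness argument. Let $M'$ be the memory contents at the end of generation $i-1$. This set is determined by the requests up to that point and by the algorithm's own coin flips. By assumption the indices $j_i^1,\dots,j_i^m$ are chosen uniformly at random, independently of everything revealed so far, so the online algorithm cannot adapt $M'$ to them. Since $|M'|<\frac{M}{2}$, at most $\frac M2$ of the $M$ pages of generation $i-1$ lie in $M'$. Each parent is a uniformly random page of generation $i-1$, so it lies outside $M'$ with probability more than $\frac12$. This holds whether the parents are sampled with or without replacement, because each index is marginally uniform. By linearity, the expected number of distinct parents missing from $M'$ is at least $m/2$, provided the parents are distinct; if sampling is with replacement, I would assume distinct parents, or else count distinct missing pages, which still needs a little care.

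The main obstacle is making the independence claim rigorous for randomized algorithms. I would handle it by conditioning on the algorithm's random bits and on the prefix of the sequence through generation $i-1$; then $M'$ is fixed while the parent indices remain uniform. A second, smaller issue is that the faults charged to generation $i$ must not be double counted across generations. This is fine because each charged fault is a request within generation $i$ itself.
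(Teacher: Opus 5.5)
Your proposal is correct and takes essentially the same approach as the paper. For each generation $i\geq 2$, each uniformly random parent lies outside a memory of fewer than $M/2$ pages with probability more than $1/2$, and since it was already accessed, reloading it is a full fault whatever $Z$ is; linearity of expectation then gives $m/2$ per generation. The only differences are that you condition directly on the algorithm's coin flips and the prefix rather than invoking Yao's principle, and that you explicitly flag the distinctness issue (sampling with or without replacement), which the paper passes over.
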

\begin{proof}
Since the request sequence is randomized and does not depend on the online algorithm, we need only prove the lemma for deterministic algorithms -- Yao's principle  then automatically yields the result for randomized algorithms. Consider the set $S_i$ of pages that must be computed at the $i^{th}$ generation, for any $i>1$. To compute $S_i$, an online algorithm must access the $m$ parents of $S_i$. Since they are chosen uniformly at random from a set of at least $M$ pages, the expected number of those parents present in memory immediately before the computation of $S_i$ begins is less than $m\cdot \frac{M/2}{M}=\frac{m}{2}$. Since those parents have already been computed immediately before $S_i$ begins, every access to one not present in memory incurs a fault (being the second or later access to that page). Thus, the online algorithm incurs in expectation more than $(\ell-1)\frac{m}{2}$ faults.
\end{proof}

The two preceding lemmas immediately yield the following:
\begin{theorem}
\label{thm:online}
Denote by $c^k_{ALG}(\sigma_\ell(m,M))$ the expected cost incurred by a generic paging algorithm ALG with memory $k$ when servicing $\sigma_\ell(m,M)$ and by $c^h_{OPT}(\sigma_\ell(m,M))$ the cost incurred by the optimal offline algorithm with memory $h$, when the zero-in page set includes
$\{p_h:1\leq h\leq m\}\cup\{s_i^j: 1\leq i \leq \ell, 1\leq j\leq M\}$). Then, for \emph{any arbitrarily large} $k$, $\alpha$ and $\beta$,
\begin{equation*}
\label{eqn:online}
c^k_{ALG} (\sigma_\ell(m,M)) >
\alpha + \beta\cdot c^{3m+1}_{OPT} (\sigma_\ell(m,M)) 
~~~~~\forall M,\ell~>~\max\{2k,(4\alpha+1),(4\beta+1)\}.
\end{equation*}
\end{theorem}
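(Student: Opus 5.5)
The theorem follows by combining Lemma~\ref{lem:offline} and Lemma~\ref{lem:online} and then checking the arithmetic. The constraint $M,\ell>\max\{2k,4\alpha+1,4\beta+1\}$ is read as a bound on both $M$ and $\ell$.

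\textbf{Upper bound on the right-hand side.} By Lemma~\ref{lem:offline}, with memory $h=3m+1$ some offline algorithm services $\sigma_\ell(m,M)$ with at most $m$ faults. The zero-in set contains all non-input pages, as the theorem assumes. Hence
\[
c^{3m+1}_{OPT}(\sigma_\ell(m,M))\leq m
\quad\text{and}\quad
\alpha+\beta\cdot c^{3m+1}_{OPT}\leq \alpha+\beta m .
\]

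\textbf{Lower bound on the left-hand side.} Since $M>2k$, we have $k<\frac{M}{2}$, so Lemma~\ref{lem:online} applies to ALG with memory $k$. It applies regardless of the zero-in set and covers randomized ALG via Yao's principle, which is already handled inside that lemma. This gives
\[
c^k_{ALG}(\sigma_\ell(m,M))\geq \frac{m(\ell-1)}{2}.
\]
The proof of that lemma actually gives a strict inequality, although the statement says ``fewer than''.

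\textbf{Comparison.} It remains to show $\frac{m(\ell-1)}{2}>\alpha+\beta m$. Since $\ell>4\alpha+1$ and $\ell>4\beta+1$, we get $\ell-1>4\max\{\alpha,\beta\}$. Therefore
\[
\frac{m(\ell-1)}{2}>2m\max\{\alpha,\beta\}\geq m\alpha+m\beta\geq \alpha+\beta m,
\]
where the last step uses $m\geq 1$. This inequality requires $\alpha,\beta\geq 0$. That is harmless: the claim is about arbitrarily large $\alpha,\beta$, and negative values only make the right-hand side smaller.

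\textbf{Main obstacle.} The only delicate point is strictness. Lemma~\ref{lem:online} as stated gives only $\geq$. If that is all one uses, strictness must come from the $\ell$-bound. This works whenever $\max\{\alpha,\beta\}>0$, because then $2m\max\{\alpha,\beta\}\geq \alpha+\beta m$ combined with the strict $\ell-1>4\max\{\alpha,\beta\}$ already yields $\frac{m(\ell-1)}{2}>\alpha+\beta m$. In the degenerate case $\alpha=\beta=0$, we only need $\frac{m(\ell-1)}{2}>0$, which holds because $\ell>1$. I would handle this edge case in one line rather than rely on the strict form of the lemma.
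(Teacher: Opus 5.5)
Your proposal is correct and matches the paper, which gives no separate proof and only says that Lemma~\ref{lem:offline} ($c^{3m+1}_{OPT}\le m$) and Lemma~\ref{lem:online} (applicable since $M>2k$) immediately yield the theorem; your chain $\frac{m(\ell-1)}{2}>2m\max\{\alpha,\beta\}\ge\alpha+\beta m$ is exactly the check the paper omits. Your separate case split on strictness is unnecessary: the strict inequality $\ell-1>4\max\{\alpha,\beta\}$ together with $m\ge 1$ already gives strictness in every case, including $\alpha=\beta=0$.
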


In other words, no matter what level of resource augmentation we allow an online algorithm, for any arbitrarily large additive constant $\alpha$ and multiplicative constant $\beta$, it will incur an expected cost greater than $\alpha$ plus $\beta$ times the cost of the offline optimal on all beam searches and genetic algorithm computations with sufficiently many levels/generations and sufficiently large datasets at each level/generation.

It is interesting to compare the predictions of ST paging and zero-in paging with what happens to real code running beam searches and/or genetic algorithms -- assuming the simplest but still realistic request sequences, where computing each node/solution $s_i^j$ takes just one access to a single parent, one to a scratch page, and one to $s_i^j$ itself. In practice, programmers and compilers go to great lengths to ensure that essentially the only faults incurred are those to access the inputs, either by trying to identify the parents of the next level/generation as soon as they are computed or -- when that is impossible -- by keeping each level/generation sufficiently small to fit within the available memory. In other words, they obtain an average fault rate close to $0$ for long request sequences, approaching that correctly predicted by zero-in paging for the offline optimal paging algorithm, by avoiding ``bad'' request sequences (for which zero-in paging correctly predicts poor performance with online memory management). ST paging instead incorrectly predicts that \emph{every} paging algorithm, including the offline optimal, incurs at least one fault per node/solution generated (the cold miss when first accessing $s_i^j$): an abysmal fault rate of over $33\%$, irrespective of the length of the request sequence and of how much memory is available. Thus, the (correct) prediction made by ST paging that in practice one can extract from online paging algorithms such as LRU almost the same performance as from optimal offline paging appears just a fortuitous artifact of the equally abysmal performance ST paging (incorrectly) attributes to all paging algorithms, under all conditions, under every parameterization of the beam search/genetic algorithm.

\section{A robust negative result}
\label{sec:robust}
A common criticism of competitive analysis is that, by focusing on the worst case, it tends to make excessively pessimistic predictions for the performance of online algorithms relative to the offline optimal. For ST paging this is not so true: many simple deterministic online algorithms are guaranteed performance optimal within a factor $2$ in terms of space and time (i.e.\ these algorithms outperform an optimal offline algorithm that runs on a memory system with twice the capacity and half the access cost). Still, several techniques have been developed to refine the blunt approach of competitive analysis, and it is natural to ask whether they can lessen the very negative result of Theorem~\ref{thm:online} (the suboptimality of online algorithms increases from a factor $2$ to an arbitrarily large factor). The answer seems, to a large extent, negative.

Roughly speaking, there are four broad classes of techniques to improve the pessimistic predictions of competitive analysis. The first involves randomization -- but Theorem~\ref{thm:online} already takes it into account. The second involves amortizing the performance evaluation over a spectrum of memory sizes \cite{youngcachechanges}; again, note that in Theorem~\ref{thm:online} the offline algorithm running with just $4$ pages of memory capacity can outperform on $\sigma_m$ every online algorithm running with \emph{any} memory capacity no larger than $m$ pages. The third class of techniques involves excluding ``pathological'' request sequences and/or amortizing the performance evaluation over a spectrum of request sequences~\cite{accessgraph,multifinger,averageorder,bijective}, on two grounds: realistic computations are not tailored against the paging algorithm, and they exhibit some natural properties such as locality. However, the request sequences of Theorem~\ref{thm:online} engender universally poor performance in all online paging algorithms, and no realistic model should exclude them as ``pathological'' since they are representative of algorithmic techniques widely used in the solution of optimization problems. 

A fourth approach is \emph{Full-Cost analysis}~\cite{fullcost} -- correctly accounting for non-fault accesses by assigning them a small but strictly positive cost $\epsilon$. Obviously, in this model every paging algorithm incurs a cost within a factor at most $\frac{1}{\epsilon}$ of any other, regardless of the respective memory capacities, and thus every online algorithm has a competitive ratio of at most $\frac{1}{\epsilon}$. This seems the only approach that can improve the results of Theorem~\ref{thm:online} -- but only very mildly, since one can set the parameters of the beam-search/genetic-algorithm request sequences from Theorem~\ref{thm:online} so that, for every possible level of resource augmentation, the product of competitive ratio and resource augmentation is still $\Omega(\frac{1}{\epsilon})$:

\begin{theorem}
\label{thm:fullcostweak}
Let the cost of a fault be $1$, and that of other requests be $\epsilon>0$. 
Consider an optimal offline paging algorithm OPT with memory $h$ and an online paging algorithm ALG with memory $k\geq h$, and denote by $c_{OPT}^h(\sigma)$ and $c_{ALG}^k(\sigma)$ their respective costs on a request sequence $\sigma$.

Consider the request sequence $\sigma_\ell(m,M)$ (from Section~\ref{sec:online}), making the additional (realistic) assumption that each subsequence $\sigma_i^j$ involves at most $a$ accesses to compute $s_i^j$, for some constant $a$  independent of $\ell$, $m$ and $M$; and in particular choose $m=\lfloor\frac{h-1}{3}\rfloor$ and $M=2k$. Then, for all arbitrarily large $\alpha$, and all sufficiently large $\ell$,
\begin{equation*}
c^k_{ALG}(\sigma_\ell(m,M))>\alpha+
\frac{1-\epsilon}{17a}\cdot \frac{1}{\epsilon}\frac{h}{k}c^h_{OPT}(\sigma_\ell(m,M)).
\end{equation*}

\end{theorem}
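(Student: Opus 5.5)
The argument bounds $c^h_{OPT}$ from above and $c^k_{ALG}$ from below on $\sigma_\ell(m,M)$, reusing Lemmas~\ref{lem:offline} and~\ref{lem:online}, and then compares the two bounds. The quantity doing the work is the length of the sequence, $|\sigma_\ell(m,M)|$.

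\textbf{Upper bound on OPT.} Since $m=\lfloor\frac{h-1}{3}\rfloor$, we have $3m+1\le h$. Lemma~\ref{lem:offline} then says OPT with memory $h$ incurs at most $m$ faults. Every other request costs $\epsilon$, so
\[
c^h_{OPT}\le m+\epsilon|\sigma_\ell(m,M)|\le m+\epsilon a\ell M = m+2\epsilon a\ell k .
\]
The term $m$ does not depend on $\ell$.

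\textbf{Lower bound on ALG.} Memory $k=M/2$ is not strictly below $M/2$, so I would rerun the proof of Lemma~\ref{lem:online} with memory $k$. Each of the $m$ parents of generation $i>1$ is in memory with probability at most $k/M=1/2$. Hence the expected number of faults per generation is at least $m/2$, and over the whole sequence at least $(\ell-1)m/2$. By Yao's principle this also covers randomized ALG. Every request costs at least $\epsilon$, and the sequence has at least $\ell M$ requests, since each $\sigma_i^j$ accesses at least $s_i^j$. Charging faults $1$ and other requests $\epsilon$ gives
\[
c^k_{ALG}\ \ge\ \epsilon\,\ell M+(1-\epsilon)\tfrac{m(\ell-1)}{2}.
\]
For the stated bound it suffices to keep the fault term $(1-\epsilon)\frac{m(\ell-1)}{2}$.

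\textbf{Comparison.} We need
\[
(1-\epsilon)\tfrac{m(\ell-1)}{2} > \alpha+\tfrac{1-\epsilon}{17a}\cdot\tfrac{h}{\epsilon k}\,(m+2\epsilon a\ell k).
\]
The $\ell$-linear part of the right-hand side is $\frac{(1-\epsilon)}{17a}\cdot 2ah\ell=\frac{2(1-\epsilon)h\ell}{17}$. So it suffices that $\frac{m}{2}>\frac{2h}{17}$, with a positive margin per unit of $\ell$. That margin absorbs $\alpha$, the constant $\frac{h m}{\epsilon k}$ term and the $-1$ in $\ell-1$ once $\ell$ is large.

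\textbf{Main obstacle.} This last step is the delicate one. We need $m\ge\frac{4h}{17}+\delta$, but $m=\lfloor\frac{h-1}{3}\rfloor$ can be as small as $\frac{h-3}{3}$, and is $0$ for $h\le3$. For large $h$ this is fine, since $\frac{h-3}{3}>\frac{4h}{17}$ once $h>51/5$. For small $h$ I would check the cases directly. Since $m\ge1$ requires $h\ge4$, I would either note that the theorem implicitly needs $h\ge4$, or find a tighter count of OPT's $\epsilon$-cost. The presumable origin of the constant $17$ is this slack from the floor. Choosing a smaller effective bound such as $m\ge h/12$ would force $17$ to change, so I would verify the arithmetic for $4\le h\le 10$ explicitly.
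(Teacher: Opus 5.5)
Your bounds are exactly the paper's. OPT pays at most the $m$ input faults plus $\epsilon$ per request (Lemma~\ref{lem:offline}, since $3m+1\le h$), and ALG incurs at least $\frac m2$ expected faults in every generation $i>1$. You are also right that Lemma~\ref{lem:online} has to be rerun at $k=\frac M2$; the paper applies it there without comment. Your global comparison of $\ell$-coefficients is the paper's per-generation ratio summed over generations, except that you discard ALG's own $\epsilon$-cost. That cost is the additive $1$ in the paper's ratio $1+\frac{1-\epsilon}{\epsilon}\frac{m}{4ak}$. As written, your argument is complete precisely when $m>\frac{4h}{17}$, i.e.\ for $h\in\{4,7,8\}$ and all $h\ge 10$ (with $\epsilon<1$).

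The obstacle you flag is real, and none of your remedies (assuming $h\ge4$, a sharper count of OPT's $\epsilon$-cost, checking $4\le h\le10$ by hand) can remove it: for $h\in\{1,2,3,5,6,9\}$ and small $\epsilon$ the statement is false. Take $h=5$ (so $m=1$), $a=2$, and $\sigma_i^j=(p_i,s_i^j)$ with $p_i$ the single random parent of generation $i$. Consider the online algorithm that keeps $p_i$ until $s_i^M$ arrives and then evicts it, and otherwise evicts leftover pages of generation $i-1$ before pages of generation $i$. It enters every generation holding $k$ pages of the previous one, and faults only on $p_i$, with probability exactly $\frac12$. Hence $c^k_{ALG}=4\epsilon k\ell+(1-\epsilon)\frac{\ell+1}{2}$. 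Meanwhile $c^h_{OPT}\ge 4\epsilon k\ell$ makes the right-hand side at least $\alpha+\frac{10}{17}(1-\epsilon)\ell$. Since $\frac{10}{17}>\frac12$, the inequality fails for all large $\ell$ once $4\epsilon k<\frac{3}{34}(1-\epsilon)$. The same construction handles $h=6$. For $h=9$, use a variant with exactly three requests per $\sigma_i^j$ in which $\sigma_i^1$ requests both parents; it gives $1+\frac{1}{4k-2}$ expected faults per generation, below the threshold $\frac{2h}{17}=\frac{18}{17}$. For $h\le3$ one has $m=0$.

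The paper's proof has the same hole. Its last step $1+\frac{1-\epsilon}{\epsilon}\frac{\lfloor (h-1)/3\rfloor}{4ak}\ge\frac{1-\epsilon}{\epsilon}\frac{h}{16ak}$ holds for these $h$ only when $16\epsilon ak\ge(1-\epsilon)(h-4m)$. So the right conclusion is that the theorem needs an extra hypothesis. One option is $m>\frac{4h}{17}$ (true for all $h\ge10$). Another, if you keep ALG's full $\epsilon$-cost $\epsilon|\sigma_\ell(m,M)|$ and cancel it against OPT's, is $2\epsilon ak+(1-\epsilon)\bigl(\frac m2-\frac{2h}{17}\bigr)>0$. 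Under either hypothesis your coefficient comparison goes through.
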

\begin{proof}
As in the proof of Lemma~\ref{lem:online}, we can restrict our analysis to deterministic algorithms by invoking Yao's principle. If we can prove that ALG incurs an expected cost at least $\frac{1-\epsilon}{16a}\cdot \frac{1}{\epsilon}\frac{h}{k}$ times that of OPT on every level/generation beyond the first, we immediately obtain the thesis, since both the cost of computing the first generation and the additive cost $\alpha$ can be made an arbitrarily small fraction of the total cost by choosing a sufficiently large $\ell$.

$\sigma_i^j$ involves access to at most $a$ pages. As in Lemma~\ref{lem:offline}, OPT with memory $h\geq 3m+1$ never incurs a fault on any such sequence with $i>1$; the total cost it incurs to compute $\sigma_i^1,\dots,\sigma_i^M$ for $i>1$ is then at most $\epsilon Ma$. ALG pays the same cost, plus an additional cost of $1-\epsilon$ for each fault. From Lemma~\ref{lem:online}, the expected number of such faults with $k=\frac{M}{2}$ is at least $\frac{m}{2}$, leading to a ratio between the costs incurred by ALG and OPT at least equal to
\begin{equation*}
\frac{(1-\epsilon)m/2+\epsilon Ma}{\epsilon Ma}
=1+ \frac{1-\epsilon}{\epsilon}\cdot\frac{m}{2a(M)}
=1+ \frac{1-\epsilon}{\epsilon}\cdot\frac{\lfloor\frac{h-1}{3}\rfloor}{4ak}
\geq \frac{1-\epsilon}{\epsilon}\cdot\frac{h}{16ak},
\end{equation*}
which proves the theorem.
\end{proof}

\section{Tight Full-Cost bounds}
\label{sec:fullcost}
This section obtains a tight characterization of the zero-in $(h,k)-$competitive ratio achievable with a Full-Cost analysis, by first proving a slightly stronger lower bound than that of Theorem~\ref{thm:fullcostweak}, and then a matching upper bound.

The actual sequences we use to prove the lower bound, unlike those in Sections~\ref{sec:online} and~\ref{sec:robust}, are tailored against each individual online algorithm, and in the case of randomized algorithms against the past choices of the algorithm (the adaptive adversary model~\cite{rand}). It is interesting to note that this tailoring -- \emph{unlike what happens in ST paging} -- yields no more than a constant factor worsening of the $(h,k)-$competitive ratio, leaving the lower bound asymptotically unchanged from that of Theorem~\ref{thm:fullcostweak} at $1+\Omega(\frac{1}{\epsilon}\frac{h}{k})$.

It is also interesting to note that the algorithms we use to prove the matching upper bound are none other than the well-known LRU, FIFO, FWF and CLOCK; more precisely all algorithms that are \emph{conservative} according to the  original definition of conservative algorithm in~\cite{youngconservative} (which is slightly different from the more popular one in \cite{competitive} since it considers evictions rather than faults, and thus encompasses FWF and FIFO as well as LRU and CLOCK): 

\begin{definition}
\label{def:conservative}
A paging algorithm is \emph{conservative} if for any request sequence $\sigma$, with memory capacity $k$, it incurs at most $k$ \emph{evictions} on any subsequence of $\sigma$ involving at most $k$ pages.
\end{definition}

Simple proofs that LRU, FIFO, CLOCK and FWF are conservative algorithms according to the definition above can be found in \cite{youngconservative}. We can then prove:

\begin{theorem}
\label{thm:fullcostcompetitive}
Under a Full-Cost analysis, any algorithm that is conservative according to Definition~\ref{def:conservative} (including LRU, FIFO, CLOCK, and FWF) has a zero-in $(h,k)-$competitive ratio $\rho_\emptyset(h,k)=1+\frac{1-\epsilon}{\epsilon}\frac{h-1}{k}$; and this ratio is optimal for online paging algorithms (even if randomized, as long as the request sequence can depend on the past choices of the algorithm).
\end{theorem}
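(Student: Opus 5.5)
The plan has two halves: an upper bound for every conservative algorithm, and a matching lower bound against any online algorithm built with an adaptive adversary. Throughout, write the Full-Cost cost of a run on $\sigma=p_1,\dots,p_n$ as $\epsilon n+(1-\epsilon)F$, where $F$ is the number of faults. The target inequality for the upper bound is then
\begin{equation*}
\epsilon n+(1-\epsilon)F_{ALG}\le\Bigl(1+\tfrac{1-\epsilon}{\epsilon}\tfrac{h-1}{k}\Bigr)\bigl(\epsilon n+(1-\epsilon)F_{OPT}\bigr)+O(1).
\end{equation*}
It suffices to show $F_{ALG}\le F_{OPT}+\frac{h-1}{k}\,n+O(1)$, since OPT's cost is at least $\epsilon n+(1-\epsilon)F_{OPT}$ and the extra multiplicative slack only helps.

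For the upper bound I would use the usual phase decomposition. Split $\sigma$ into maximal consecutive phases, each containing requests to at most $k$ distinct pages, so that each phase together with the first request of the next touches $k+1$ distinct pages. Every phase except the last therefore has $n_p\ge k$ requests. Classify ALG's faults into two kinds.
\begin{itemize}
\item Cold faults are first accesses to pages not in $Z$. OPT pays exactly the same ones, because these are unavoidable for every algorithm.
\item Refaults are accesses to pages accessed earlier but currently absent from memory.
\end{itemize}
Conservativeness (Definition~\ref{def:conservative}) bounds ALG's evictions within a phase by $k$. The key per-phase claim is
\begin{equation*}
\text{refaults}_{ALG}(p)\le\text{refaults}_{OPT}(p)+(h-1).
\end{equation*}
The reasoning is as follows. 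Let $R$ be the set of distinct non-fresh pages requested in the phase; ALG refaults at most once on each page of $R$. At the start of the phase OPT holds at most $h-1$ of them besides the page it is currently serving, and every other page of $R$ costs OPT at least one refault. Summing over phases and using $h-1\le (h-1)n_p/k$ gives the desired bound on $F_{ALG}$, with the last phase absorbed into the additive constant.

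I expect this per-phase comparison to be the main obstacle. One must check that OPT's memory at the phase boundary really caps OPT's savings at $h-1$; the subtle point is whether a page in OPT's memory counts against $h$ or $h-1$. One must also check that a conservative ALG never refaults twice on the same page within a phase, which may need the eviction bound rather than just distinctness. If the direct argument fails, I would fall back to a potential-function argument over $|M^h_{OPT}\setminus M^k_{ALG}|$.

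For the lower bound I would build the sequence round by round against the algorithm's current memory (the adaptive adversary). Each round first requests $k$ brand-new zero-in pages, which cost nothing extra to anyone. At that moment $k+h-1$ relevant pages exist: the $k$ new ones and the $h-1$ ``carried'' pages from the previous round. Since ALG holds at most $k$ of them, at least $h-1$ are missing. The adversary requests $h-1$ such missing pages; these become the carried pages of the next round.

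The offline algorithm knows in advance which $h-1$ pages will be re-requested, so with memory $h$ it keeps exactly those plus the page currently being served, and it never faults after the first round. Per round, OPT pays about $\epsilon(k+h-1)$ and ALG pays an extra $(1-\epsilon)(h-1)$. This gives a ratio of $1+\frac{1-\epsilon}{\epsilon}\frac{h-1}{k+h-1}$.

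To sharpen this to the stated $\frac{h-1}{k}$, I would overlap rounds by letting the $h-1$ re-requests count toward the next round's flood, so that each round issues only $k-(h-1)$ fresh requests. For randomized ALG the adaptive adversary does the same against the realized memory. Finally, taking many rounds makes the additive constant $\alpha$ negligible, which completes the lower bound.
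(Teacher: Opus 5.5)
\emph{Upper bound.} Your reduction to $F_{ALG}\le F_{OPT}+\frac{h-1}{k}n+O(1)$ is valid, and it would finish more cleanly than the paper's maximization of a per-phase ratio. The gap is the per-phase claim you feed into it: it is false as stated, and your argument only yields $h$ per phase, not $h-1$.

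Take FIFO with $h=k=2$ on $q,q',y,q',q,q'$. The phases are $\{q,q'\},\{y,q'\},\{q,q'\}$. FIFO evicts $q$ at $y$, then $q'$ at the second $q$, then $y$ at the last $q'$, so it refaults twice in the last phase. The optimal schedule that evicts $q'$ at $y$ (reloading it at position 4) enters the last phase holding $\{q,q'\}$ and refaults there zero times, so your inequality would need $2\le 0+1$. The culprit is exactly the point you flagged: the first request of a phase may itself be a page of $R$ that OPT already holds. OPT can therefore hold $h$ pages of $R$, and your argument would give only $1+\frac{1-\epsilon}{\epsilon}\frac{h}{k}$.

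The paper's fix is to charge OPT on a shifted window, from just after the first request of phase $i$ to just after the first request of phase $i+1$. This window contains $k$ distinct pages other than the first page of phase $i$. OPT must hold that first page at the window's start, hence at most $h-1$ of the others, so it performs at least $k-h+1$ loads in the window. On ALG's side, do not count refaults page by page, since Definition~\ref{def:conservative} bounds evictions, not per-page refaults. Instead use that ALG's loads are at most its evictions plus $k$, i.e.\ at most $kN+k$ over $N$ phases. For both algorithms every load is a fault except the first load of each zero-in page, and the shifted windows are disjoint. Hence $F_{ALG}-F_{OPT}\le kN+k-(k-h+1)(N-1)=(h-1)(N-1)+2k$, and $N-1\le n/k$ gives exactly what your reduction needs.

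\emph{Lower bound.} Your construction is sound but only gives $1+\frac{1-\epsilon}{\epsilon}\frac{h-1}{k+h-1}$, and the proposed overlap does not repair it. If the $h-1$ re-requests count toward the next flood, the candidate pool at decision time has only $k$ pages. ALG can hold all of them (LRU does, since they are its $k$ most recent pages), so pigeonhole forces no miss. If you instead keep an older carried set in the pool to restore $k+h-1$ candidates, OPT may have to hold up to $2(h-1)$ pages while the current re-requests are served, which exceeds $h$ once $h\ge 3$.

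The missing idea is per-request rather than per-batch adaptivity. Among any $k+1$ previously touched pages, at least one is always missing from ALG's memory. Requesting such a page one at a time therefore forces a fault on \emph{every} request, with a pool of only $k+1$.

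The paper's phase has three parts:
\begin{itemize}
\item one new zero-in head page;
\item a midsection that repeatedly requests a page missing from ALG's memory among the head and the $k$ distinct pages of the previous phase, for as long as this uses at most $h-1$ distinct pages of the previous phase;
\item $k-h$ new tail pages.
\end{itemize}
Each phase then has exactly $k$ distinct pages and can be as short as $k$ requests. ALG faults on all of its at least $h-1$ midsection requests. OPT with memory $h$ never faults after the first phase: it enters each phase holding exactly the $h-1$ previous-phase pages the coming midsection uses, absorbs the head for free, and every other request is a first access to a zero-in page. The per-phase ratio $\frac{\epsilon(k-h+1)+(\ell_i-(k-h+1))}{\epsilon\ell_i}$ increases with $\ell_i$, so it is at least its value at $\ell_i=k$, which is $\rho_\emptyset(h,k)$.
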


\begin{proof}
We first prove that $\rho_\emptyset(h,k)$ is a lower bound to the $(h,k)-$competitive ratio achievable under zero-in paging (in the Full-Cost model).
Let ALG be a generic online algorithm operating with memory $k$.
Consider a request sequence $\sigma$ formed by subsequences $\sigma_1,\sigma_2,\dots$ each, save at most the last, involving $k$ distinct pages. $\sigma_1$ simply requests, in order, $k$ distinct pages, and these are the only pages that are not zero-in. $\sigma_i$, with $i>1$, is partitioned in up to three phases: the \emph{head}, the \emph{midsection}, and possibly the \emph{tail}.
The head consists of a single request for a ``new'', never requested page. Out of the set formed by this page and by the $k$ distinct pages in $\sigma_{i-1}$, the midsection of $\sigma_i$ keeps requesting a page not in ALG's memory, for as long as it can do so without requesting more than $h-1$ distinct pages of $\sigma_{i-1}$. Finally, the tail of $\sigma_i$ requests, in order, $k-h$ ``new'', never requested pages. 

By construction, ALG experiences a fault at every request in the midsection of $\sigma_i$. On the other hand, the optimal offline algorithm OPT with memory $h$ can easily avoid incurring \emph{any} faults during $\sigma_i$ if, at the end of $\sigma_{i-1}$, it holds in memory the pages in the midsection of $\sigma_i$ (as every other request is the first request for a given zero-in page). OPT can obviously do so for $i=2$; the same is true for $i>2$ since at the end of the midsection of $\sigma_{i-1}$ OPT can hold in memory all the $h$ pages in the head and midsection of $\sigma_{i-1}$, and then replace those not in the midsection of $\sigma_i$ with the corresponding ones in the tail of $\sigma_{i-1}$. 

Note that, if the midsection of $\sigma_i$ can be extended indefinitely, the ratio between the costs incurred by ALG and OPT can be made arbitrarily close to $\frac{1}{\epsilon}$ -- since on every midsection request ALG incurs a fault and OPT does not. Otherwise, denoting by $\ell_i$ the number of (not necessarily distinct) requests in $\sigma_i$, the midsection of $\sigma_i$ contains $(\ell_i-(k-h+1))$ requests and the ratio of the costs incurred by ALG and by OPT while servicing $\sigma_i$ for $i>1$ is
\begin{equation*}
\label{eqn:fullcostlbbasic}
r^{h,k}(\ell_i)=\frac{\epsilon(k-h+1)+(\ell_i-(k-h+1))}{\epsilon\ell_i}.
\end{equation*}

Note that $r^{h,k}(\ell_i)$ for $\epsilon<1$ strictly increases with $\ell_i$. Then, setting $\ell_i$ to its minimum value $k$ ($\sigma_i$ involves at least $k$ requests), we obtain that
\begin{equation*}
\label{eqn:fullcostlbfull}
\frac{c^k_{ALG}(\sigma)}{c^h_{OPT}(\sigma)}
\geq
\frac{(n-1)\epsilon(k-h+1)+(k-(k-h+1))+k}{(n-1)\epsilon k +k},
\end{equation*}
where the last term converges for $n\rightarrow\infty$ to
\begin{equation*}
\label{eqn:fullcostlbfinal}
\frac{\epsilon(k-h+1)+(k-(k-h+1))}{\epsilon k}
=1+\frac{1-\epsilon}{\epsilon}\frac{h-1}{k}=\rho_\emptyset(h,k).
\end{equation*}

We then prove the optimality of conservative algorithms by showing that their
$(h,k)$-competitive ratio is no higher than $1+\frac{1-\epsilon}{\epsilon}\frac{h-1}{k}$.

Although the presence of zero-in pages does considerably complicate the analysis,
we still use the standard technique of partitioning a generic request sequence $\sigma$ into maximal length subsequences $\sigma_1,\dots, \sigma_n$ involving $k$ distinct pages each, save possibly the last . Let $z_i$ be the number of first accesses to zero-in pages during $\sigma_i$. The total number of faults incurred on $\sigma$ by a conservative algorithm ALG with memory $k$ cannot exceed the total number of evictions minus the total number of distinct zero-in pages accessed, and thus is at most
\begin{equation*}
\label{eqn:fullcostevictions}
n\cdot k - \sum_{i=1}^n z_i = \sum_{i=1}^n (k-z_i).
\end{equation*}

The subsequence formed by $\sigma_i$, $i<n$, plus the first request of $\sigma_{i+1}$ involves $k+1$ distinct pages (otherwise $\sigma_i$ would not be maximal). Then, the subsequence extending from immediately after the first request of $\sigma_i$ to immediately after the first request of $\sigma_{i+1}$ involves at least $k$ distinct pages different from the first page of $\sigma_i$; at the beginning of this subsequence an optimal offline algorithm OPT with memory $h$ can have in memory at most $h-1$ of those $k$ pages (since it must also have in memory the first page of $\sigma_i$). Thus, in this interval, OPT incurs a number of faults no less than $\max(0, k-h+1-z_i)$ -- which we shall write as $(k-h+1-z_i)^+$ for brevity. Denoting by $\ell_i$ the number of (not necessarily distinct) requests in $\sigma_i$, the cost  incurred by ALG is then
\begin{equation*}
\label{eqn:fullcostALG}
c^k_{ALG}(\sigma)\leq\sum_{i=1}^n (\epsilon\ell_i + (1-\epsilon)(k-z_i)).
\end{equation*}
The cost incurred by OPT is
\begin{equation*}
\label{eqn:fullcostOPT}
c^h_{OPT}(\sigma)\geq\sum_{i=1}^{n-1} (\epsilon\ell_i + (1-\epsilon)(k-h+1-z_i)^+) + \epsilon \ell_n.
\end{equation*}
Then
\begin{equation*}
\label{eqn:fullcostcompratioprel}
c^k_{ALG}(\sigma)\leq
\frac{\sum_{i=1}^{n-1} (\epsilon\ell_i + (1-\epsilon)(k-z_i)) + \epsilon\ell_n}
{\sum_{i=1}^{n-1} (\epsilon\ell_i + (1-\epsilon)(k-h+1-z_i)^+) + \epsilon\ell_n}
\cdot c^h_{OPT}(\sigma) + (1-\epsilon)(k-z_n).
\end{equation*}
And since the maximum of 
$\frac{\epsilon\ell_i + (1-\epsilon)(k-z_i)}
{\epsilon\ell_i + (1-\epsilon)(k-h+1-z_i)^+}$ is
$1+\frac{1-\epsilon}{\epsilon}\frac{h-1}{k}$ (obtained for $z_i=k-h+1$ and $\ell_i=k$, noting that $\sigma_i$ contains at least $k$ requests) we have that
\begin{equation*}
\label{eqn:eqn:fullcostcompratioub}
c^k_{ALG}(\sigma) < \mleft(1+\frac{1-\epsilon}{\epsilon}\frac{h-1}{k}\mright)\cdot c^h_{OPT}(\sigma) + k.\qedhere
\end{equation*}
\end{proof}

\section{Zero-out paging}
\label{sec:out}
In ST paging, the cost of a fault accounts both for the cost to bring the corresponding page into memory (the actual fault), and for the cost to eventually evict the page. Note that the latter is not significant if the metric under analysis is time, since page eviction can be (and typically is) removed from the critical path through the use of an eviction buffer. However, if the metric is e.g.\ energy, the cost of evicting a page is comparable to that of bringing it into memory. In such a scenario, one might argue that making a non-input page zero-in underestimates the costs involved: although the page starts out in memory ``for free'', the cost of the first eviction should not be ignored.

We can easily extend zero-in paging to deal with this issue accounting separately for the cost of faults and of evictions, and associating to each sequence both a zero-in and a (not necessarily disjoint) \emph{zero-out} page set. Zero-in pages are then, informally, still those starting out directly in memory, and thus do not pay for the first fault. Zero-out pages are instead those pages that need not be written out (i.e.\ evicted from memory) at the end of the computation -- informally, because they are not part of the output or because they are directly pipelined to another process; thus they do not pay for their \emph{last eviction}.

It is crucial to note that this refinement does not compromise our results, and in particular those of Sections~\ref{sec:online} and~\ref{sec:robust}. A beam search with beam width $m$, or a genetic algorithm with $m$ survivors/generation, can be realistically modelled as having at most $m$ nodes/solutions as output -- all other pages are then zero-out. Under this assumption, whatever the relative costs of faults and evictions, Theorems~\ref{thm:online} and~\ref{thm:fullcostweak} continue to hold virtually unchanged save for a minor and straightforward adaptation of the respective proofs.

\section{Conclusions}
\label{sec:end}
Zero-in paging analysis raises doubts on the performance predictions provided by competitive analysis for many popular online paging algorithms. Note that under ST paging competitive analysis \emph{does} predict to a large extent the good performance exhibited in practice by these algorithms
compared to the offline optimal;
but in the light of our results these correct predictions appear just a fortuitous artifact of an incorrect assumption.

We challenge the algorithms community to find a way to mitigate these very negative results -- namely, that online paging algorithms with proper accounting of faults can have arbitrarily worse performance than the offline optimal even if given access to memory of arbitrarily larger capacity -- without resorting to incorrect assumptions. The task appears difficult with existing tools, in particular because our results are not based on pathological sequences tailored against the individual paging algorithm, but instead on request sequences characteristic of widely used computational techniques (beam searches and genetic algorithms) that engender universally poor performance in all online algorithms. Only a Full-Cost analysis, where the cost ratio between a fault and non-fault access is a finite value $\frac{1}{\epsilon}$, can achieve some success in this direction; but it still predicts that the product of the cost and capacity overheads of any online algorithms compared to the offline optimal (i.e. the product of its competitive ratio and resource augmentation) is at least of the same order as  $\frac{1}{\epsilon}$.

Ultimately, good performance of paging algorithms facing blindly an adversarially crafted request sequence may simply be a mirage. After all, programmers and compilers go to great lengths to exploit their knowledge of the task at hand and of a system's architecture so as to massage memory accesses into request sequences that are as ``nice'' as possible, rather than adversarial. It would not then be surprising if such help were in many cases indispensable for decent  memory management performance. Whether a simple model can capture the quintessential elements of this assistance may well be the most important question this work leaves open.

One consequence of our results is in the popular area of Cache-Oblivious~\cite{cacheoblivious} and Cache-Adaptive~\cite{cacheadaptive} algorithms, where the cache model is an \emph{Ideal Cache} employing an optimal offline replacement policy. The justification provided for this generally unrealistic assumption is the factor-2-optimality of online (ST) paging -- so that switching from an Ideal Cache to a cache of twice the capacity with a realistic replacement policy such as LRU involves at most doubling the total access cost. But this is not true with zero-in paging, requiring a careful re-evaluation of optimality claims made on individual Cache-Oblivious and Cache-Adaptive algorithms: in principle one such algorithm could vastly outperform another on an Ideal Cache and still be vastly outperformed by the latter on a realistic cache employing e.g. LRU replacement.

Finally, it is worth noting that our tight characterization of the zero-in $(h,k)-$competitive ratio under a Full-Cost analysis shows that classic paging algorithms such as LRU and FIFO are still optimal among online paging algorithms. This result does increase our confidence that their comparative effectiveness is not an artifact of the framework, either theoretical or experimental, used to analyse them.

\paragraph*{Acknowledgments.} The authors thank Gianfranco Bilardi for his encouragement and for a review of an early version of this work.

\bibliographystyle{abbrv}
\bibliography{zip}

\end{document}